\documentclass[12pt,oneside]{article}
\usepackage{amssymb,amsmath,amsthm,latexsym}
\usepackage{graphicx}
\usepackage[utf8x]{inputenc}
\usepackage{wrapfig}
\usepackage{color}
\usepackage{titlefoot}
\usepackage{authblk}
\def\Fg {{\cal F}} 
\def\Hg {{\cal H}} 
\def\Sg {{\cal S}} 
\newtheorem{thm}{Theorem}
\newtheorem{lem}{Lemma}
\newtheorem{cor}{Corollary}
\newcommand{\red}[1]{#1}

\title{\LARGE \bf Non-relativistic Pauli--Fierz Hamiltonian for less than two photons}
\author{D. Dayantsolmon, A. Galtbayar}


\begin{document}

\maketitle

\begin{abstract}
We consider the Pauli--Fierz model, which describes a particle (an electron) coupled to the quantized electromagnetic field and limit the number of photons to less than 2. 
By computing \red{the resolvent explicitly}, we located the spectrum of the Hamiltonian   mass. 
Our results do not depend on the coupling constant $e$ nor on the infrared cutoff parameter $R$. 
\end{abstract}

\unmarkedfntext{2010 {\it Mathematics Subject Classification} 81V10, (81T16, 47A10, 47A75) \\

{\it Keywords}: Pauli--Fierz Hamiltonian, mass renormalization, dressed electron states}

\section{Introduction}\label{resseq1}
In this paper, we study the fiber Hamiltonian for the standard model of non-relativistic quantum electrodynamics, called \red{the} Pauli--Fierz model, when the number of photons is restricted to 0 or 1. The latter condition allows us to compute the resolvent of the fiber Hamiltonian explicitly. Therefore, the spectrum and the effective mass can be obtained for \red{arbitrary} values of parameters such as \red{the} coupling constant with the electromagnetic field and \red{the} ultraviolet cutoff radius.

There is a rich literature on the spectral and scattering properties of this model. The spectral properties  of the Pauli-Fierz Hamiltonian was studied in \cite{GGM} and the existence of the ground states of the Pauli-Fierz Hamiltonian was proved in \cite{GLL}. See \cite{F2} for discussions on the spectral properties and scattering theory of the Nelson Hamiltonian. The spin-boson \red{model} and the Nelson model are discussed in \cite{MS} and \cite{GJY} for when the number of photons is restricted to a few.  This work was initially inspired by the paper \cite{M}, where the author considered a model with less than two phonons {\em without polarization} and computed the spectrum, the ground state, as well as the effective mass.

An extensive review of the properties of the ground state of the fiber Hamiltonian, its differentiability, and the effective mass can be found in \cite{B}. Most of the previous results were derived for various conditions for the above-mentioned parameters and for a certain limited range of the total momentum. 
We note that for the Nelson model, the spectrum shifts from zero to negative values due to the radiation field, which is not observed in our setup.

Let us introduce the model. 
We set the bare electron mass $m$ and the speed of light $c$ to be equal to 1. 
The Hilbert space for the system is given by
$\Hg:=L^2(\mathbb{R}_x^3)\otimes \Fg$,
where the bosonic Fock space $\Fg$ is defined by
$$
\Fg=\bigoplus_{n=0}^\infty\Fg^{(n)}=\bigoplus_{n=0}^\infty\left(\otimes^n_s \Sg\right),
$$
with $\Sg=L^2(\mathbb{R}^3)\oplus L^2(\mathbb{R}^3)$. 
We have denoted the $n$-fold symmetric tensor product of $\Sg$ by $\otimes^n_s \Sg$, with $\otimes^0_s \Sg=\mathbb{C}$. 
The annihilation and the creation operators $a$ and $a^*$ are defined as
\begin{equation}\label{g1}
a^\sharp(v)=\sum_{\lambda=1}^2 \int a^\sharp(k,\lambda) v(k,\lambda)dk,
\end{equation}
for $v=(v(\cdot,1), v(\cdot,2))\in L^2(\mathbb{R}^3)\oplus L^2(\mathbb{R}^3)$, 
where $a^\sharp$ is for $a$ or $a^*$ and $a^\sharp(k,\lambda)$ is a formal kernel. 
The free photon field operator $H_f$ on $\Fg$ is defined as
$$
H_f=\sum_{\lambda=1}^2 \int\omega(k)a^\ast(k,\lambda) a(k,\lambda) dk,
$$
where $\omega(k)=|k|$ is the photon energy.
The quantized radiation field $A_{g}(x)=(A_{g1}(x),A_{g2}(x),A_{g3}(x))$, $x\in \mathbb{R}^3$, acting on $\Fg$ is given by
$$
A_{gj}(x)=\frac{1}{\sqrt2} \sum_{\lambda=1}^2 \int e_j(k,\lambda)\left[g(k)e^{-ik\cdot x} a^*(k, \lambda)+ \overline{g(k)}e^{ik\cdot x} a(k, \lambda) \right] dk.
$$
Here, $e(k,\lambda)=(e_1(k,\lambda), e_2(k,\lambda),e_3(k,\lambda))$ are polarization vectors satisfying the conditions $k\cdot e(k,\lambda)=0$ and $e(k,\lambda)\cdot e(k,\mu)=\delta_{\lambda \mu}$,  $\lambda,\mu=1,2$. \\

\paragraph{Assumption.} We assume that 
\begin{equation}\label{factor}
    g(k) = \frac{\chi(k)}{\omega^{1/2-\sigma}(k)},
\end{equation}
where $\chi(k)$ is a characteristic function of region $\{k\in\mathbb{R}^3 \mid |k|\leq R \}$, $R>0$ is \red{the} ultraviolet cutoff radius, 
and $0\leq\sigma<1/2$ is \red{an} infrared renormalization parameter.
 
Note that the infrared renormalization was introduced only to remove singularities of some auxiliary integrals that appear later. Our main results hold for all values of $\sigma$, including zero. 

Finally, the Pauli--Fierz Hamiltonian is defined as
\begin{equation}
H=\frac12 (-i\nabla_x\otimes1 -eA_g(x))^2+ 1\otimes H_f,
\end{equation}
where $e$ is the charge of the electron or the coupling constant with the field. 
For assumption (\ref{factor}), it was proved in \cite{HS2} that $H$ is self-adjoint on the domain $D(-\dfrac12\Delta+H_f)$ for arbitrary values of the coupling constant $e$. 
We define the total momentum operator on $\Hg$ as
\begin{equation}
P_\text{tot}=-i\nabla_{x}\otimes 1+1\otimes P_f, 
\end{equation}
where $P_f=\sum_{\lambda=1}^2 \int k a^\ast(k,\lambda) a(k,\lambda) dk$ is the photon momentum. 
Since $H$ commutes with $P_\text{tot}$, it can be decomposed with respect to the spectrum of $P_\text{tot}$:
$$
H=\int_{\mathbb{R}^3}^{\oplus} \bar H(p) dp,
$$
where $\bar H(p)$ is defined as
\[
\bar H(p) = \frac12 (p-P_f-eA_g(0))^2+H_f
\]
on $\Fg$. Now $p\in \mathbb{R}^3$ is considered as a parameter.
Let $E_p$ be the projection operator onto the 0 or 1 photon space. We introduce the corresponding Pauli--Fierz operator:
\begin{equation}\label{PFcut}
H(p):=E_p\bar H(p)E_p.
\end{equation}
In the following, we will work only with the operator $H(p)$. 

The rest of this paper is organized as follows. 
We state \red{our} main results in Section~\ref{resseq2}
and compute the resolvent of $H(p)$ in Section~\ref{resseq3}. 
Then in Section~\ref{resseq4}, we locate the spectrum of $H(p)$, 
and \red{finally} in Section~\ref{resseq5}, we prove our main results.

\section{Main results}\label{resseq2}

We are interested in the spectral properties of the Pauli--Fierz Hamiltonian, and in particular, in finding the resolvent of the corresponding fiber Hamiltonian and calculating the effective mass.  

The fiber Hamiltonian $H(p)$, defined  in the 0 or 1 photon space, is a finite-rank perturbation of an operator \red{whose spectrum consists} only of an absolutely continuous \red{part}.
Therefore, it is very similar to the Friedrichs Hamiltonian. 
For the spectral characterization of the Friedrichs Hamiltonian and related results, see \cite{F1} and \cite{HSp3}.

\red{Restricting the} number of photons allows us to express the resolvent explicitly. Therefore, all \red{our} results were obtained in a nonperturbative way and are independent of parameters such as $e$ \red{and} $R$. In addition, this enables us to work on the scattering properties of this model, as in \cite {GJY}, which, \red{however,} will be discussed elsewhere.

We are now ready to formulate the main results. 
For any $p\in \mathbb{R}^3$, let
\begin{equation}\label{gg108}
\begin{aligned}
z_{0}(|p|) &= \min_{k\in \mathbb{R}^3}\left\{\frac12(p-k)^2+|k|+\gamma_0\right\} \\
           &=
\begin{cases}
    \dfrac{1}{2} p^2+\gamma_0,     & \text{if } |p|\leq 1, \\
    |p| - \dfrac{1}{2} + \gamma_0, & \text{if } |p|>1, \\
\end{cases}
\end{aligned}
\end{equation}
be the curve on the $(|p|,z)$ plane, where $\gamma_0$ is \red{a} constant depending on $e$ \red{and} $R$. 
An example to keep in mind is 
\[
\gamma_0=\frac{\pi}{1+\sigma} e^2 R^{2+2\sigma}.
\]
We define the function $F(p,z)$ as 
\begin{multline}
    F(p,z)= 
    \frac12p^2-z+\gamma_0 \\
    - \pi e^2\left(\frac12p^2+z-\gamma_0\right) 
    \int_0^R \int_{-1}^1 \frac{(1-t^2)dt\rho^{1+2\sigma} d\rho}{p^2/2-|p|\rho t + \rho^2/2+\rho+\gamma_0-z},
\end{multline}
which is crucial for finding the eigenvalue of the reduced operator $H(p)$. 
\red{In Lemma~\ref{dd2} of Section~\ref{resseq4}, we will derive} the following properties \red{of} the function $F(p,z)$ in the region $\Omega^-=\{(|p|,z) \mid z \leq z_{0}(|p|), z\in \mathbb{R} \}$:
\begin{itemize}  
\item $F(p,z)$ is real analytic and \red{is a} decreasing function on $z$.
\item The equation $F(p,z)=0$ has a unique solution $z=z^\ast(p)$ when $|p|\leq 1$ and there exists a constant $p_0>1$ such that it has no solution when $|p|>p_0$.
\end{itemize}

\begin{thm}[Spectrum of $H(p)$]\label{th1} 
For any $p\in \mathbb{R}^3$, the spectrum of $H(p)$ consists of the essential spectrum $ [z_0(|p|), +\infty)$ and the eigenvalue $z^*(p)$, which is a solution of the equation $F(p,z)=0$ in the region 
$$
\Omega^-=\{(|p|,z) \mid z \leq z_{0}(|p|), z\in \mathbb{R} \}.
$$
Moreover, we have the following bound for $z^*(p)$:
$$
0<z^*(p)\leq\frac{7}{2}+\frac{ 4\pi e^2R^{1+2\sigma}}{1+2\sigma}+\gamma_0.
$$ 
\end{thm}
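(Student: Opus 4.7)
The plan is to combine the explicit resolvent formula that will be derived in Section~\ref{resseq3} with the properties of $F(p,z)$ collected in Lemma~\ref{dd2}. I break the argument into three pieces: identifying the essential spectrum, locating the eigenvalue, and bounding $z^*(p)$.

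For the essential spectrum, first observe that the photon-number-preserving part of $\bar H(p)$ restricted to the 0 or 1 photon sector acts as multiplication: by the constant $\frac{1}{2}p^2+\gamma_0$ on the vacuum and by $\frac{1}{2}(p-k)^2+|k|+\gamma_0$ on $\Sg$, where $\gamma_0$ absorbs the vacuum expectation of $\frac{e^2}{2}A_g(0)^2$. The spectrum of the $\Sg$-multiplier is exactly $[z_0(|p|),+\infty)$ by~(\ref{gg108}), and the vacuum scalar also lies inside this interval. The remaining off-diagonal interaction couples the vacuum and one-photon sectors only through the form factor $g$, hence has finite rank (in particular is compact), so Weyl's theorem gives $\sigma_{\mathrm{ess}}(H(p))=[z_0(|p|),+\infty)$. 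For the eigenvalue, the resolvent formula from Section~\ref{resseq3} will display $F(p,z)^{-1}$ as a scalar factor in $(H(p)-z)^{-1}$, so eigenvalues in $\Omega^-$ are exactly the zeros of $F(p,z)$ there; Lemma~\ref{dd2} then identifies this zero as $z^*(p)$ whenever $|p|\leq 1$ and shows there is none for $|p|>p_0$.

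The lower bound $z^*(p)>0$ follows from monotonicity of $F$ in $z$ together with $F(p,0)>0$; the latter I would verify by completing the square in $t$ in the denominator of the integrand and performing a direct sign analysis using $\gamma_0>0$. For the upper bound, $z^*(p)\leq z_0(|p|)$ because $z^*(p)\in\Omega^-$. When $|p|\leq 1$ this gives $z^*(p)\leq 1/2+\gamma_0$, already well within the claimed bound. When $1<|p|\leq p_0$ one has $z_0(|p|)=|p|-1/2+\gamma_0$, so it remains to bound $p_0$. Substituting $z=z_0(|p|)$ into $F(p,z)=0$ and using $\int_{-1}^1(1-t^2)\,dt=4/3$ to reduce to a one-dimensional $\rho$-integral, I expect to extract $p_0\leq 4+\frac{4\pi e^2 R^{1+2\sigma}}{1+2\sigma}$, which yields the advertised bound.

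The main obstacle is this last estimate: at $z=z_0(|p|)$ the denominator in the integral defining $F$ can vanish on the sphere of $k$-values minimizing $\frac{1}{2}(p-k)^2+|k|$, so verifying integrability and extracting a clean upper bound with the correct $R^{1+2\sigma}/(1+2\sigma)$ scaling and constant $4$ in the bound on $p_0$ requires careful estimation of the integrand near that singular locus. This is where the bulk of the real work lies.
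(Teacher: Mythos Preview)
Your overall architecture matches the paper's proof: Weyl's theorem via a finite-rank perturbation for the essential spectrum, the resolvent computation of Section~\ref{resseq3} to reduce eigenvalues to zeros of $F$, and Lemma~\ref{dd2} (together with the $D_{12}$ estimate of Lemma~\ref{dd1}) for the existence and the two-sided bound on $z^*(p)$. Two points need correcting.

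First, the compact perturbation is not purely off-diagonal. In the paper's decomposition $H(p)=H_0(p)+W(p)$ (see \eqref{dd3}--\eqref{W(p)}), the one-photon diagonal blocks of $W(p)$ carry the rank-one pieces $\tfrac{e^2}{2}|G(\lambda)\rangle\cdot\langle G(\mu)|$ coming from the $A_g^2$ term; these must also be absorbed into the finite-rank part before Weyl's theorem applies.

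Second, your plan for the lower bound $z^*(p)>0$ cannot work as written: the denominator $p^2/2-|p|\rho t+\rho^2/2+\rho+\gamma_0-z$ is \emph{linear} in $t$, so there is no square to complete there, and a direct sign analysis of $F(p,0)$ fails for large $e$ unless one uses the specific relation $\gamma_0=\pi e^2 d$ with $d=R^{2+2\sigma}/(1+\sigma)$. The paper's argument (end of the proof of Lemma~\ref{dd2}) does exactly this: it rewrites $F(p,z)=0$ as a quadratic in $\mu=\pi e^2$, observes that a nonnegative root with $z\le0$ would force $D_{12}(p,z)\ge 2d/p^2$, and rules this out via
\[
D_{12}(p,z)-\frac{2d}{p^2}=\int_0^R\!\!\int_{-1}^1 \frac{2\bigl(-(|p|t-\rho)^2/2-\rho-\gamma_0+z\bigr)}{p^2\bigl(p^2/2-|p|\rho t+\rho^2/2+\rho+\gamma_0-z\bigr)}\,\rho^{1+2\sigma}\,dt\,d\rho<0.
\]
The completion of the square in $t$ appears in the \emph{numerator} of this difference, after the comparison with $2d/p^2$; that comparison is the missing idea in your sketch. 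Your treatment of the upper bound (reducing to an estimate on $p_0$ via $D_{12}$ on the curve $z=z_0(|p|)$) is exactly what Lemma~\ref{dd1} provides, so the ``main obstacle'' you flag is already handled there.
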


The eigenvalue $ z^*(p)$, when it \red{exists}, is the infimum of the spectrum $ H(p)$, which we denote by $ E_{\sigma}(p) $. 
Knowing the exact value of $ E_{\sigma}(p) $ \red{would allow us to} calculate the effective mass $ m_{\rm eff} $, 
\red{defined through}
$$
E_\sigma(|p|) -E_\sigma (0) = \frac{p^2}{2m _ {\rm eff}} + O (|p|^3)
$$
for small $p$. When $E_\sigma(|p|)$ is \red{a} $C^2$-function \red{in a neighborhood of} $p=0$, as a direct consequence of the \red{preceding} definition, we have
\begin{equation}\label{eff1}
    \frac{1}{m_{\rm eff}}=\lim_{p\to 0} \frac{\partial^2 E_{\sigma}(p)}{\partial |p|^2}.
\end{equation}

\begin{thm}[Effective mass] \label{th2}
\red{The function}
$E_\sigma(|p|)$ is \red{a} $C^2$-function \red{in a neighborhood of}  $p=0$ and the effective mass for $H(p)$ has the following form:
\begin{equation}\label{eff2} 
    \frac{1}{m_{\rm eff}} = \frac{1-\pi e^2 D_{12}(0,\gamma_0)}{1+ \pi e^2 D_{12}(0,\gamma_0)}
\end{equation}
where 
$$
D_{12}(p,z) = \int_0^R \int_{-1}^1 \frac{(1-t^2)\rho^{1+2\sigma} dtd\rho}{p^2/2-|p|\rho t + \rho^2/2+\rho+\gamma_0-z}.
$$
\end{thm}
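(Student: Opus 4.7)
The strategy is to apply the implicit function theorem to the characterisation $F(p,E_\sigma(p))=0$ furnished by Theorem~\ref{th1}. The preliminary observation, essential for the smoothness statement, is that under the substitution $t\mapsto -t$ the weight $(1-t^2)$ and the measure $dt$ are invariant while $-|p|\rho t$ flips sign; hence the integral defining $F$ (and $D_{12}$) is even in $|p|$ and both functions factor through $s:=|p|^2$. Writing $\widetilde F(s,z)$ and $\widetilde D_{12}(s,z)$ for the resulting functions, Lemma~\ref{dd2} tells us that $\widetilde F$ is real analytic and strictly decreasing in $z$.

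First I would pin down the base point. At $(p,z)=(0,\gamma_0)$ the explicit prefactor $\tfrac12 p^2+z-\gamma_0$ standing in front of the integral vanishes, so $\widetilde F(0,\gamma_0)=-\gamma_0+\gamma_0=0$. By the uniqueness clause of Theorem~\ref{th1} this identifies $E_\sigma(0)=\gamma_0$. The monotonicity in Lemma~\ref{dd2}, together with a direct calculation, gives $\partial_z\widetilde F(0,\gamma_0)=-\bigl(1+\pi e^2 D_{12}(0,\gamma_0)\bigr)<0$, so the implicit function theorem yields a real-analytic $\widetilde E(s)$ on a neighbourhood of $0$ with $\widetilde F(s,\widetilde E(s))=0$ and $\widetilde E(0)=\gamma_0$. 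Consequently $E_\sigma(|p|)=\widetilde E(|p|^2)$ is a real-analytic, in particular $C^2$, function of $p\in\mathbb R^3$ near the origin.

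Comparing the Taylor expansion $\widetilde E(s)=\gamma_0+\widetilde E'(0)\,s+O(s^2)$ with the defining relation $E_\sigma(|p|)-E_\sigma(0)=\tfrac{p^2}{2m_{\rm eff}}+O(|p|^3)$ gives $1/m_{\rm eff}=2\widetilde E'(0)$. The remaining step is then to compute the two partial derivatives at $(0,\gamma_0)$. Because the prefactor $\tfrac{s}{2}+z-\gamma_0$ vanishes at the base point, every term in the differentiation that contains derivatives of $\widetilde D_{12}$ drops out, and one obtains cleanly
\[
\partial_s\widetilde F(0,\gamma_0)=\tfrac12\bigl(1-\pi e^2 D_{12}(0,\gamma_0)\bigr),\qquad
\partial_z\widetilde F(0,\gamma_0)=-\bigl(1+\pi e^2 D_{12}(0,\gamma_0)\bigr).
\]
Substituting $\widetilde E'(0)=-\partial_s\widetilde F(0,\gamma_0)/\partial_z\widetilde F(0,\gamma_0)$ into $1/m_{\rm eff}=2\widetilde E'(0)$ produces exactly the formula stated in \eqref{eff2}.

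I expect the only genuinely delicate point to be the smoothness claim itself: a priori $E_\sigma$ is defined as a function of $|p|$, which is not smooth at the origin, so one needs the parity observation to upgrade it to a smooth function of $p^2$, together with the analyticity of $\widetilde F$ down to the boundary point $(0,\gamma_0)$ of $\Omega^-$ provided by Lemma~\ref{dd2}. Everything afterwards is routine implicit differentiation, made especially clean by the vanishing of the prefactor $\tfrac{s}{2}+z-\gamma_0$ at the base point.
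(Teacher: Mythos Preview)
Your argument is correct and follows essentially the same route as the paper: both identify $E_\sigma(0)=\gamma_0$ and compute $1/m_{\rm eff}$ by implicit differentiation of $F(p,z)=0$, exploiting that the prefactor $\tfrac12 p^2+z-\gamma_0$ kills the $D_{12}$-derivative terms at the base point. Your change of variable $s=|p|^2$ (justified by the parity observation) is a mild refinement that lets you work with first derivatives and makes the $C^2$ claim explicit, whereas the paper differentiates twice in $|p|$ and uses $E_\sigma'(0)=0$; the two computations are equivalent.
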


We consider the special case as $\sigma\to 0$, which removes the infrared renormalization.
 
\begin{cor} For any values of $e$ and $R$, we have 
\begin{equation}\label{eff3} 
    \lim_{\sigma\to 0}\frac{1}{m_{\rm eff}}=\frac{1-(8/3)\pi e^2 \ln(R/2+1)}{1+ (8/3)\pi e^2 \ln(R/2+1)}.
\end{equation}
\end{cor}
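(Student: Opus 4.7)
The plan is to exploit the explicit closed form for $1/m_{\rm eff}$ provided by Theorem~\ref{th2} and simply take the limit $\sigma \to 0$ under the integral defining $D_{12}$. Since the map $x \mapsto (1 - \pi e^2 x)/(1 + \pi e^2 x)$ is continuous (away from the harmless pole at $x = -1/(\pi e^2)$, which does not arise here), it is enough to compute $\lim_{\sigma \to 0} D_{12}(0,\gamma_0)$ and substitute into (\ref{eff2}).

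First I would specialize the integrand of $D_{12}$ to $p = 0$ and $z = \gamma_0$. The terms $p^2/2 - |p|\rho t$ and $\gamma_0 - z$ vanish simultaneously, leaving only $\rho^2/2 + \rho = \rho(\rho/2+1)$ in the denominator, so the double integral factors. The angular piece evaluates elementarily to $\int_{-1}^{1}(1-t^2)\, dt = 4/3$, whence
$$
D_{12}(0,\gamma_0) = \frac{4}{3}\int_0^R \frac{\rho^{1+2\sigma}}{\rho^2/2 + \rho}\, d\rho = \frac{4}{3}\int_0^R \frac{\rho^{2\sigma}}{\rho/2 + 1}\, d\rho.
$$
Next, letting $\sigma \to 0$ by dominated convergence, using that for any fixed $\sigma_0 \in (0, 1/2)$ and $\sigma \in [0,\sigma_0]$ the integrand is uniformly majorized by $\max(1, R^{2\sigma_0})/(\rho/2+1)$ on $(0,R]$, one obtains
$$
\lim_{\sigma \to 0} D_{12}(0,\gamma_0) = \frac{4}{3}\int_0^R \frac{d\rho}{\rho/2+1} = \frac{8}{3}\ln\!\left(\frac{R}{2}+1\right).
$$
Substituting into (\ref{eff2}) yields (\ref{eff3}).

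The computation is essentially routine; the only subtle point worth flagging is that $\gamma_0$ itself varies with $\sigma$ (the paper's example is $\gamma_0 = \pi e^2 R^{2+2\sigma}/(1+\sigma)$), but this dependence is invisible in the corollary because $\gamma_0$ enters $D_{12}(0,\gamma_0)$ only through the combination $\gamma_0 - z$ evaluated at $z = \gamma_0$, which vanishes identically. Hence the main (and still very minor) obstacle is just the dominated-convergence justification above; no uniform-in-$\sigma$ control of the more complicated function $F(p,z)$ or of the location $z^*(p)$ is required, since the corollary follows from the closed-form expression (\ref{eff2}) alone.
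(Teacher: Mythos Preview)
Your proposal is correct and follows exactly the route the paper intends: the corollary is stated immediately after Theorem~\ref{th2} without a separate proof, as it is a direct evaluation of $D_{12}(0,\gamma_0)$ in formula~(\ref{eff2}) followed by the limit $\sigma\to 0$. Your computation of the factored integral, the dominated-convergence justification, and the observation that the $\sigma$-dependence of $\gamma_0$ drops out are precisely the routine details the paper leaves implicit.
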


Expansion of the effective mass in \red{terms} of the fine structure constant $\alpha=e^2/4\pi$ was done in \cite{HS1}, \cite{BCJI}, assuming the constant $e$ be a small. Note that our result gives the effective mass for arbitrary values of $e$ and $R$, and it is consistent with the results of \cite{HS1} and \cite{BCJI} when $e^2 \ln(R/2+1)\approx o(1)$. Indeed, from formula (\ref{eff3}), we can derive that  
$$
m_{\rm eff}\approx 1+\frac{16}3\pi e^2 \ln \left( \frac{R}{2} + 1 \right).
$$ 

\section{The resolvent of $H(p)$} \label{resseq3}

Before \red{proving} the main results, we will calculate the resolvent of the operator ${H}(p)$ defined by (\ref{PFcut}), in the space $\mathcal{H}=\mathbb{C}\oplus L^2(\mathbb{R}^3)\oplus L^2(\mathbb{R}^3)$. 

For any $p\in\mathbb{R}^3$ and $f=(f_0(p), f_1(p,k,1), f_1(p,k,2))^t\in \mathcal{H}$, the matrix form of $H(p)f$ is 
\begin{equation}\label{H(p)}{\small{
\hspace{-0.5cm}\left(\begin{array}{ccc}
\widetilde{T}(p)  &  -\frac{e}{\sqrt2}p\cdot \langle G(1)| & -\frac{e}{\sqrt2}p\cdot \langle G(2)| \\
-\frac{e}{\sqrt2}p\cdot | G(1)\rangle &  \widetilde{L}(p,k)+\frac{e^2}{2} |G(1)\rangle \cdot\langle G(1)| & \frac{e^2}{2} |G(1)\rangle\cdot \langle G(2)| \\
-\frac{e}{\sqrt2}p\cdot | G(2)\rangle  & \frac{e^2}{2} |G(2)\rangle\cdot \langle G(1)| &
\widetilde{L}(p,k)+\frac{e^2}{2} |G(2)\rangle\cdot \langle G(2)|
\end{array} \right)\left(
\begin{matrix}
f_0(p)\\
f_1(p,k,1)\\
f_1(p,k,2)
\end{matrix}\right) }},
\end{equation} 
where the annihilation and creation operators in $\mathcal{H}$ are denoted \red{by}
$$ 
\langle G(\lambda)| v = \int \overline{G(k,\lambda)}v(k) dk \quad \text{and} \quad | G(\lambda)\rangle v=G(k,\lambda)v(k),
$$
respectively, for each polarization direction $\lambda=1,2$.  
\red{Here, we have introduced the notation}
$G(k,\lambda)=e(k,\lambda)g(k)$, and \red{note that}
$$
\| G\|^2=\sum_{\lambda=1}^{2}\int|e(k,\lambda)|^2g^2(k)dk=\frac{4\pi}{1+\sigma} R^{2+2\sigma}. 
$$
The elements on the diagonal are 
\begin{align}
\widetilde{T}(p)   &= \frac12 p^2+ \frac{e^2}{4}\|G\|^2 \\
\widetilde{L}(p,k) &= \frac12(p-k)^2+\omega(k)+\frac{e^2}{4}\| G\|^2.
\end{align}
To find the resolvent of $H(p)$, we need to solve the equation 
\begin{equation}\label{reseq19}
(H(p)-z)f=u
\end{equation}
for a given $u=(u_0(p), u_1(p,k,1), u_1(p,k,2))^t$. 

For ease of writing, we also use the following notation:
\begin{align*}
T              &= \widetilde{T}(p)-z, \\
L              &= \widetilde{L}(p,k)-z, \\
b_\lambda(p,k) &= -\frac {e}{\sqrt 2} p\cdot G(k,\lambda), & \lambda=1,2, \\
N(p,k,\lambda) &= b_\lambda(p,k)p+\frac{Te}{\sqrt2}G(k,\lambda), & \lambda=1,2.
\end{align*}

\begin{lem} \label{lem11}
For any $z\in \mathbb{C}\setminus \mathbb{R}$, the solution of Equation~(\ref{reseq19}) can be written as 
\begin{align}
f_0(p)           &= \frac{1}{T}\left[u_0(p)+p\cdot (S(p,z)U^{-1})\right]  \label{g1.16}\\
f_1(p,k,\lambda) &= \frac{1}{TL}\left[Tu_1(p,k,\lambda)- b_\lambda(p,k)u_0(p)- N(p,k,\lambda)\cdot (S(p,z)U^{-1})\right],\label{g1.17}
\end{align}
where  $\lambda=1,2$, 
$$S(p,z)=\frac{e}{\sqrt 2}\int \sum_{\lambda=1}^2 \overline{G(k,\lambda)}\left(\frac{1}{L}u_1(p,k,\lambda)-\frac{1}{TL} b_\lambda(p,k)u_0(p)\right)dk,
 $$
and the matrix $U:=(u_{ij})_{i,j=1}^3$ is given by
\begin{equation}\label{g103}
   u_{ij}=
\begin{cases}
a+b p_i^2, & i=j,  \\
bp_ip_j ,  & i\neq j,
\end{cases}
\end{equation}
with $a=\dfrac{2+D_1+D_2}{2}$ and $b= \dfrac{D_1-3D_2}{2p^2}-\dfrac{D_1-D_2}{T}.$
Here,
\begin{align}
\label{04.091}
   D_1(p,z) &= \pi e^2 \int_0^R \int_{-1}^1 \frac{\rho^{1+2\sigma} dt d\rho}{p^2/2-|p|\rho t + \rho^2/2+\rho+\gamma_0-z}, \\
\label{04.092}
D_2(p,z) &= \pi e^2 \int_0^R \int_{-1}^1 \frac{t^2\rho^{1+2\sigma} dtd\rho}{p^2/2-|p|\rho t + \rho^2/2+\rho+\gamma_0-z}.
\end{align}
\end{lem}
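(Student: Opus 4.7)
My plan is to reduce $(H(p)-z)f = u$ to a single linear equation for the auxiliary $3$-vector
\[
X := \sum_{\mu=1}^{2}\int \overline{G(k',\mu)}\,f_1(p,k',\mu)\,dk',
\]
which captures the only non-local information about $f_1$ entering the system. The second and third rows of (\ref{H(p)}) read, for $\lambda=1,2$,
\[
L\,f_1(p,k,\lambda) + b_\lambda(p,k)\,f_0(p) + \tfrac{e^2}{2}\,G(k,\lambda)\cdot X = u_1(p,k,\lambda),
\]
and because $z\notin\mathbb{R}$ the factor $L$ does not vanish, so we may solve pointwise for
\[
f_1(p,k,\lambda) \;=\; \frac{u_1(p,k,\lambda) - b_\lambda(p,k)f_0}{L} \;-\; \frac{e^2}{2L}\,G(k,\lambda)\cdot X.
\]
Setting $Y := \tfrac{e}{\sqrt{2}}X$, a short rewrite using the definition $N(p,k,\lambda) = b_\lambda p + \tfrac{Te}{\sqrt{2}}G(k,\lambda)$ shows that the above is identical with formula (\ref{g1.17}) once one substitutes $f_0 = T^{-1}(u_0 + p\cdot Y)$, which is (\ref{g1.16}).

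The task therefore reduces to showing that the ansatz (\ref{g1.16})--(\ref{g1.17}) is consistent with the remaining two conditions: the first row of the matrix equation, and the defining identity for $X$. Inserting the pointwise formula for $f_1$ into the defining identity for $X$ and eliminating $f_0$ via (\ref{g1.16}) yields
\[
\Bigl[\,I + \tfrac{e^2}{2}\,\mathcal{M}\bigl(I - \tfrac{1}{T}pp^{T}\bigr)\,\Bigr]\,Y \;=\; S, \qquad \mathcal{M}_{ij} := \sum_{\lambda=1}^{2}\int \frac{G_i(k,\lambda)\,G_j(k,\lambda)}{L(p,k)}\,dk,
\]
with $S$ precisely the vector appearing in the lemma (the correction term $-b_\lambda u_0/(TL)$ in $S$ arises from substituting $u_0/T$ for the data part of $f_0$). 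Moreover, taking the inner product of this vector identity with $p$ and using $Tf_0 = u_0 + p\cdot Y$ recovers precisely the first row of the matrix equation, so that row is automatic once $UY = S$ holds.

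The technical heart of the argument is the explicit evaluation of $\mathcal{M}$. Using the polarisation completeness relation $\sum_\lambda e_i(k,\lambda)e_j(k,\lambda) = \delta_{ij} - k_ik_j/|k|^2$, spherical coordinates $k = \rho\,\hat k$, and the polar axis aligned with $p$ so that $L$ depends on the angular variables only through $t = \cos\theta$, azimuthal integration diagonalises $\mathcal{M}$ in any orthonormal basis containing $\hat p = p/|p|$. Matching the resulting $\rho,t$-integrals with (\ref{04.091}) and (\ref{04.092}) gives
\[
\tfrac{e^2}{2}\,\mathcal{M} \;=\; \tfrac{1}{2}(D_1+D_2)\bigl(I - \hat p\hat p^{T}\bigr) \;+\; (D_1 - D_2)\,\hat p\hat p^{T},
\]
and the elementary identity $(I - \hat p\hat p^{T})(I - T^{-1}pp^{T}) = I - \hat p\hat p^{T}$ then collapses $U = I + \tfrac{e^2}{2}\mathcal{M}(I - T^{-1}pp^{T})$ to $aI + b\,pp^{T}$ with the coefficients $a$ and $b$ of the statement.

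Finally, $U$ is a rank-one perturbation of a scalar matrix, so the Sherman--Morrison formula gives $U^{-1}$ in closed form and $Y = U^{-1}S$ back-substituted into the ansatz reproduces (\ref{g1.16}) and (\ref{g1.17}) verbatim. I expect the only step requiring genuine care is the polarisation-sum bookkeeping that identifies $\mathcal{M}$ as a linear combination of $I - \hat p\hat p^{T}$ and $\hat p\hat p^{T}$ with coefficients exactly $\tfrac12(D_1+D_2)$ and $D_1-D_2$; everything before and after that step is mechanical linear algebra.
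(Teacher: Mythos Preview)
Your proposal is correct and follows essentially the same route as the paper: introduce the auxiliary vector $Q=\tfrac{e}{\sqrt2}\sum_\lambda\int \overline{G}f_1$ (your $Y$), solve the rows of $(H(p)-z)f=u$ pointwise for $f_0,f_1$ in terms of $Q$, substitute back to obtain a $3\times3$ linear system $UQ=S$, and identify $U=aI+b\,pp^T$. The only cosmetic difference is in the evaluation of the polarisation-summed integral: the paper computes the entries of the matrix $C=\bigl(\tfrac{e^2}{2}\int g^2 k_ik_j/(L|k|^2)\,dk\bigr)$ one by one in an explicit spherical basis aligned with $p$, whereas you observe directly that $\tfrac{e^2}{2}\mathcal{M}$ commutes with $\hat p\hat p^T$ and write it in the spectral form $\tfrac12(D_1+D_2)(I-\hat p\hat p^T)+(D_1-D_2)\hat p\hat p^T$, then use $(I-\hat p\hat p^T)pp^T=0$ to collapse the product; this is a slightly cleaner bookkeeping but not a different idea.
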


\begin{proof} 
By introducing the notation
\begin{equation}
Q = \sum_{\lambda=1,2}\frac{e}{\sqrt 2}\int \overline{G(k',\lambda)}f_1(p,k',\lambda)dk', \label{q1}
\end{equation} 
Equation~(\ref{reseq19}) can be written using the matrix form  (\ref{H(p)}) of $H(p)$, as
\begin{align}\label{04.099}
Tf_0(p)-p\cdot Q &= u_0(p) \notag\\ 
b_\lambda(p,k)f_0(p)+Lf_1(p,k,\lambda)+\frac{e}{\sqrt2}G(k,\lambda)\cdot Q &= u_1(p,k,\lambda),\quad \lambda=1,2.
\end{align}
\red{Upon} solving (\ref{04.099}), we obtain  
\begin{align}
f_0(p) &= \frac{1}{T}\left[u_0(p)+p\cdot Q \right], \label{g6a} \\
f_1(p,k,\lambda) &= \frac{1}{TL}\left[Tu_1(p,k,\lambda)- b_\lambda(p,k)u_0(p)- N(p,k,\lambda)\cdot Q\right]. \label{g4a}
\end{align}
To conclude the proof, it suffices to find $Q$. We substitute (\ref{g6a}) \red{and} (\ref{g4a}) into (\ref{q1}) to get an equation for $Q$:
\begin{equation}\label{g10}
Q+\frac{e}{T\sqrt 2}\sum_{\lambda=1,2}\int \frac{1}{L} \overline{G(k,\lambda)}N(p,k,\lambda)\cdot Qdk =
S(p,z). 
\end{equation}
For $k\neq 0$, let $\hat{k}:=k/|k|$. 
Using the identity 
$$
p-(\hat k,p)\hat k=(e(k,1)\cdot p)e(k,1)+( e(k,2)\cdot p)e(k,2),
$$ 
Equation~(\ref{g10}) can be written as
\begin{multline}
Q
-\left(\frac{e^2}{2T}\int \frac{g^2(k)}{L} \left(p-(p\cdot \hat k) \hat k\right)dk\right)(p\cdot Q) \\
+\left(\frac{e^2}{2}\int \frac{g^2(k)}{L} \left(Q-(Q\cdot \hat k) \hat k\right)dk\right)=S(p,z). \label{q2}
\end{multline} 
Next, we show that all the integrals in (\ref{q2}) can be reduced to certain combinations of the integrals $D_1$ and $D_2$, which are defined in (\ref{04.091}) and (\ref{04.092}). 
It is easy to derive that
\begin{equation}\label{g101}
\begin{aligned}
\dfrac{e^2}{2}\int \frac{g^2(k)}{L} dk &= \pi e^2 \int_0^R \int_{-1}^1 \frac{\rho^{1+2\sigma} dt d\rho}{p^2/2-|p|\rho t + \rho^2/2+\rho+\gamma_0-z} \\
  &= D_1(p,z),
\end{aligned}
\end{equation}
and hence, we can rewrite Equation~(\ref{q2}) in a simple matrix form as
\begin{equation}\label{q2.5}
Q\left(E+\dfrac{1}{T}\left( D_1 E-C\right)\left( T E-p^tp\right)\right)=S(p,z).
\end{equation} 
Here, $E$ is a $3\times 3$ unit matrix and  
\begin{equation} \label{C}
C=\left(\dfrac{e^2}{2}\int \dfrac{g^2(k)k_ik_j}{Lk^2}dk\right)_{i,j=1}^3.
\end{equation}  
To calculate the elements of \red{the} matrix $C$, we introduce the following spherical coordinate system $(\rho, \varphi, \theta)$ where $0\leq \varphi<2\pi$ and $0\leq \theta<\pi$. We take the zenith direction \red{to be} $\vec l_1(p)=\hat{p}$, 
and the azimuth direction \red{to be} an orthogonal vector $\vec l_2(p)=(p_2,-p_1,0)/p_+$. Here, $p_+=\sqrt{p_1^2+p_2^2}$ and $\hat{p}=p/|p|$. Then, any vector $k=(k_1, k_2, k_3)$ can be written as 
$$
k=a_1\vec l_1(p)+a_2\vec l_2(p)+a_3\vec l_3(p),
$$
where \red{the} third orthogonal vector is $\vec l_3(p)=(p_1p_3,p_2p_3,-p_1^2-p_2^2)/(|p|p_+)$ and 
\begin{align*}
a_1 &= (k,\vec l_1(p))=\rho\,\cos\theta, \\
a_2 &= (k,\vec l_2(p))=\rho\,\cos\varphi\,\sin\theta, \\
a_3 &= (k,\vec l_3(p))=\rho\,\sin\varphi\,\sin\theta.
\end{align*}
This \red{gives us}
\begin{equation}\label{Sphere}
k^t=\begin{pmatrix}
k_1 \\[4mm]
k_2\\[4mm]
k_3 
\end{pmatrix}
=
\begin{pmatrix}
\dfrac{p_1}{|p|}\rho\,\cos\theta+ \dfrac{p_2}{p_+} \rho\,\cos\varphi\,\sin\theta+\dfrac{p_1p_3}{|p|p_+} \rho\,\sin\varphi\,\sin\theta \\[4mm]
\dfrac{p_2}{|p|}\rho\,\cos\theta- \dfrac{p_1}{p_+} \rho\,\cos\varphi\,\sin\theta+\dfrac{p_2p_3}{|p|p_+} \rho\,\sin\varphi\,\sin\theta  \\[4mm]
\dfrac{p_3}{|p|}\rho\,\cos\theta-\dfrac{p_+}{|p|} \rho\,\sin\varphi\,\sin\theta \\ 
\end{pmatrix} .
\end{equation}
Now, computing the elements of $C$, in the aforementioned basis, we get
\begin{equation}\label{g102}
   \frac{e^2}{2}\int \frac{g^2(k)k_ik_j}{L|k|^2} dk =
\begin{cases} 
\dfrac{D_1(p^2-p_i^2) + D_2(3p_i^2 -p^2)}{2p^2}, & \text{if } i=j, \\[3ex]
-\dfrac{p_ip_j(D_1 - 3D_2)}{2p^2},               & \text{if } i\neq j,
\end{cases}
\end{equation}
where $D_1$ \red{and} $D_2$ are defined in (\ref{04.091}) and (\ref{04.092}). 
As an example, let us compute one of the elements of the matrix $C$:
\begin{align*}
c_{23} = c_{32} &= \frac{e^2}{2}\int \dfrac{g^2(k)k_2k_3}{Lk^2}dk\\
       &= \pi e^2\int_0^{R} d\rho \int_0^{\pi}  \dfrac{\rho^{1+2\sigma}}{L}\left(\dfrac{p_2p_3}{p^2}\cos^2\theta- \dfrac{p_2p_3}{2p^2}(1-\cos^2\theta)\right)\sin\theta d \theta\\
       &= -\dfrac{p_2p_3}{2p^2}(D_1-3D_2).
\end{align*} 
Now invoking the identities  (\ref{g101}) \red{and}  (\ref{g102}) in (\ref{q2.5}), 
one can obtain the equation $QU=S(p,z)$, with $U$ defined as in (\ref{g103}). 
Then, substituting $Q=S(p,z)U^{-1}$ into (\ref{g6a}) \red{and} (\ref{g4a}) \red{finally establishes the proof}.
\end{proof}

\section{The spectrum of $H(p)$}\label{resseq4}

In this section, we describe the spectrum of $H(p)$ for each $p\in \mathbb{R}^3$.
We \red{make the decomposition} $H(p)=H_0(p)+W(p)$,
where
\begin{equation}\label{dd3}
   H_0(p)=
 \begin{pmatrix}
\widetilde{T}(p)  &  0 & 0 \\
0 &  \widetilde{L}(p,k) & 0 \\
0  & 0 &
\widetilde{L}(p,k)
\end{pmatrix} 
\end{equation} 
and
\begin{equation}\label{W(p)}
W(p)=\dfrac{e}{\sqrt2}
 \begin{pmatrix}
0  & -p\cdot \langle G(1)| & -p\cdot \langle G(2)| \\
-p\cdot | G(1)\rangle &  \dfrac{e}{\sqrt2} |G(1)\rangle \cdot\langle G(1)| & \dfrac{e}{\sqrt2} |G(1)\rangle \cdot\langle G(2)| \\
-p\cdot | G(2)\rangle  & \dfrac{e}{\sqrt2} |G(2)\rangle\cdot \langle G(1)| & \dfrac{e}{\sqrt2} |G(2)\rangle\cdot \langle G(2)|
\end{pmatrix}.
\end{equation}
Note that the spectrum of $H_0(p)$ consists only of the essential spectrum, which is $[z_0(|p|),+\infty)$. 
Since $H(p)$ is \red{a} finite rank perturbation of $H_0(p)$, by Weyl's theorem, the essential spectrum of the operator $H(p)$ remains the same. 
From (\ref{g1.16}) \red{and} (\ref{g1.17}) of Lemma~\ref{lem11}, one can \red{then} see that the only possible addition to the spectrum in the interval $(-\infty, z_0(|p|))$ could be the zeros of the function $\det U$. 

The remainder of this section will be devoted to finding these zeros for each $p\in \mathbb{R}^3$. From (\ref{g103}), we derive that\begin{equation} \label{KKK} 
\begin{aligned}
K(p,z) &:= \det U= a^3+a^2bp^2 \\
       &= \dfrac{(D_1 + D_2 + 2)^2}{4T}(T - (D_1-D_2)p^2 + (D_1 - D_2)T) \\
       &= \dfrac{(D_1 + D_2 + 2)^2}{4T}\left[\frac12p^2-z+\gamma_0 - (D_1 - D_2)\left(\frac12p^2+z-\gamma_0\right)\right].
\end{aligned}
\end{equation}
Since $(D_1 + D_2 + 2)^2>0$, it is important to know the behavior of 
$$
D_{12}:=\frac{1}{\pi e^2}(D_1-D_2) 
$$
when finding the zeros of $K(p,z)$. The next lemma gives an estimate for the function $D_{12}$ on the curve $z_0(|p|)$.

\begin{lem}
\label{dd1}
On the curve $z= z_0(|p|)$, we have the following estimate for $D_{12}$:
\begin{equation}\label{eqz8}
  D_{12}(p,z_0(|p|))\leq 
  \begin{cases}
  \dfrac{4}{3(1+2\sigma)}\cdot 
  \dfrac{R^{1+2\sigma}}{1-|p|},                          & \text{if } |p| < \dfrac{1}{2}, \\[3ex]
  \dfrac{4 }{1+2\sigma}\cdot \dfrac{R^{1+2\sigma}}{|p|}, & \text{if } |p| \geq \dfrac{1}{2}.
\end{cases}
\end{equation}
\end{lem}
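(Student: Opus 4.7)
The plan is to reduce the estimate to an elementary single-variable bound on the inner $t$-integral, applied separately to each branch of the piecewise function $z_0(|p|)$. The auxiliary inequality I would use is
\[
\int_{-1}^1 \frac{1-t^2}{1-\beta t}\,dt \le \frac{4}{\beta}, \qquad 0 < \beta \le 1,
\]
which follows from the pointwise bound $1-t^2 \le \frac{2}{\beta}(1-\beta t)$; that bound rearranges to $(t-1)^2 + 2/\beta - 2 \ge 0$, and is true precisely because $\beta \le 1$.

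In the regime $|p| < 1/2$ one has $z_0(|p|) = \tfrac12 p^2 + \gamma_0$, so the denominator of the integrand defining $D_{12}$ collapses to $\rho(\rho/2 + 1 - |p|t)$. Discarding $\rho/2 \ge 0$ and using $|p|t \le |p|$ bounds this below by $\rho(1-|p|)$; pulling $1/(1-|p|)$ outside and evaluating $\int_{-1}^1 (1-t^2)\,dt = 4/3$ together with $\int_0^R \rho^{2\sigma}\,d\rho = R^{1+2\sigma}/(1+2\sigma)$ immediately yields the first estimate.

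For $|p| \ge 1/2$ I would split according to the definition of $z_0(|p|)$. When $1/2 \le |p| \le 1$, the denominator still factors as $\rho(\rho/2+1-|p|t)$, which has the form $\rho(A-Bt)$ with $A = \rho/2 + 1 \ge 1 \ge |p| = B$; applying the auxiliary inequality (after pulling out $1/A$) bounds the inner integral by $4/B = 4/|p|$, and the remaining $\rho$-integral of $\rho^{2\sigma}$ produces the claimed bound. When $|p| > 1$, the formula $z_0 = |p| - \tfrac12 + \gamma_0$ produces a denominator of the form $A-Bt$ with no extra factor of $\rho$ outside, where
\[
A = \tfrac12\bigl[(|p|-1)^2 + \rho(\rho+2)\bigr], \qquad B = |p|\rho.
\]
Completing the square shows $A - B = \tfrac12(\rho - (|p|-1))^2 \ge 0$, so $\beta := B/A \in (0,1]$; the auxiliary inequality then bounds the inner integral by $4/B = 4/(|p|\rho)$, and integration in $\rho$ again produces the stated estimate.

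The main obstacle I anticipate is the sub-case $|p| > 1$: the branch $z_0 = |p| - \tfrac12 + \gamma_0$ makes the denominator's structure less transparent, and one must verify $A \ge B$ in order to apply the auxiliary inequality. The completion-of-square identity $A - B = \tfrac12(\rho - (|p|-1))^2$ is the key algebraic step; once it is in place, everything else is routine integration.
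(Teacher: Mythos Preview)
Your proposal is correct and follows essentially the same approach as the paper: the case $|p|<\tfrac12$ is handled identically, and for $|p|\ge\tfrac12$ both arguments rest on the pointwise bound $(1-t^2)/(A-Bt)\le 2/B$ valid whenever $A\ge B$, together with the identity $A-B=\tfrac12(\rho-(|p|-1))^2$ in the sub-case $|p|>1$. The only difference is cosmetic: you package the bound once as the auxiliary inequality $\int_{-1}^1\frac{1-t^2}{1-\beta t}\,dt\le\frac{4}{\beta}$ and apply it uniformly, whereas the paper carries out the equivalent manipulations separately in each sub-case (using $1-t^2\le 2(1-t)$ and then, for $|p|>1$, dropping the square term from the rewritten denominator $\tfrac12(\rho-|p|+1)^2+|p|\rho(1-t)$).
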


\begin{proof} 
From (\ref{04.091}) \red{and} (\ref{04.092}), we \red{infer}
$$
D_{12}(p,z)= \int_0^R \int_{-1}^1 \frac{(1-t^2)\rho^{1+2\sigma} dtd\rho}{p^2/2-|p|\rho t + \rho^2/2+\rho+\gamma_0-z}.
$$
When $|p| < 1/2$, we have $z_0(|p|) = p^2/2+\gamma_0$ and  
\begin{align*}
D_{12}\left(p,\frac{p^2}{2}+\gamma_0\right) &\leq \int_0^R \int_{-1}^1 \frac{(1-t^2)\rho^{2\sigma} dtd\rho}{1-|p|} \\
&= \frac{4 R^{1+2\sigma}}{3(1-|p|)(1+2\sigma)}.
\end{align*}
When $1/2 \leq |p|\leq 1$, for any $t\in [-1,1]$, we have the following estimate:
\begin{align*}
\frac{1-t^2}{-|p| t + \rho/2 +1} &\leq \frac{2(1-t)}{-|p|t + \rho/2 +1} \\
  &= \frac{2}{|p|}-\frac{2}{|p|}\cdot\frac{\rho/2 +1-|p|}{\rho/2 +1-|p|t} \\
  & \leq \frac2{|p|}. 
\end{align*}
Therefore, \red{we conclude}
\begin{align*}
D_{12}\left(p,\frac{p^2}{2}+\gamma_0\right) &\leq \int_0^R \int_{-1}^1 \frac{2\rho^{2\sigma} dtd\rho}{|p|} \\
&= \frac{4 }{1+2\sigma}\cdot \frac{R^{1+2\sigma}}{|p|}.
\end{align*}
Finally, when $|p|>1$, we have $z_0(|p|)=|p|-1/2+\gamma_0$ and
\begin{align*}
D_{12}\left(p,|p| - \frac12 + \gamma_0\right) &= \int_0^R\int_{-1}^1 \frac{(1-t^2)dt\rho^{1+2\sigma} d\rho}{(1/2)(\rho-|p|+1)^2+|p|\rho(1-t)} \\
    &\leq \int_0^R\int_{-1}^1 \frac{(1-t^2)dt\rho^{2\sigma} d\rho}{|p|(1-t)} \\
    &\leq \frac{4 }{1+2\sigma}\cdot \frac{R^{1+2\sigma}}{|p|}.
\end{align*}
Combining the \red{two preceding} estimates, we complete the proof.
\end{proof}

Next, we investigate the real solutions of Equation~(\ref{KKK}) in the region $\Omega^-$, to locate the eigenvalues (if any) of the operator $H(p)$.

\begin{lem}\label{dd2}
Let $(|p|,z)\in \Omega^-$ with $z$ real. 
The equation $F(p,z)=0$ has a unique solution $z=z^\ast(p)$ when $|p|\leq 1$ and there exists a constant $p_0>1$ such that it has no solution when $|p|>p_0$.
\red{This} constant $p_0$ \red{satisfies the estimate} 
\[
p_0 < 4+\frac{2\pi e^2 R^{1+2\sigma}}{1+2\sigma}.
\] 
Moreover, when $0\leq |p|\leq p_0$, the range of $z$ \red{has the following two-sided bound}:
\begin{equation}\label{Eig1}
   0< z\leq\frac{7}{2}+\frac{2\pi e^2R^{1+2\sigma}}{1+2\sigma}+\gamma_0.
\end{equation}
\end{lem}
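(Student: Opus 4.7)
The plan is to analyze $F(p,z)$ slice by slice in $p$ and reduce each claim to elementary monotonicity in $z$ plus two boundary evaluations. I would start from the form
\[
F(p,z) = \tfrac{1}{2} p^{2} + \gamma_{0} - z - \pi e^{2}\bigl(\tfrac{1}{2}p^{2} + z - \gamma_{0}\bigr)\, D_{12}(p,z).
\]
On $\Omega^{-}$ the integrand of $D_{12}$ has non-negative denominator $\tfrac12(p-k)^{2} + |k| + \gamma_{0} - z$ by the definition of $z_{0}$, so $D_{12}(p,\cdot)$ is real-analytic, positive, and strictly increasing. I would next observe that any zero of $F$ must lie in the strip $\tfrac12 p^{2} + z - \gamma_{0} > 0$: outside that strip both summands in the above display are non-negative and $\tfrac12 p^{2} + \gamma_{0} - z$ is strictly positive. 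On the strip, differentiation gives
\[
\partial_{z} F = -1 - \pi e^{2} D_{12} - \pi e^{2}\bigl(\tfrac{1}{2}p^{2} + z - \gamma_{0}\bigr)\, \partial_{z} D_{12} \leq -1,
\]
yielding strict monotonicity and hence uniqueness of any zero.

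For existence when $|p|\leq 1$, I would evaluate at the boundary $z = z_{0}(|p|) = \tfrac12 p^{2} + \gamma_{0}$: a direct substitution cancels everything except $F(p, z_{0}) = -\pi e^{2} p^{2}\, D_{12}(p, z_{0}) \leq 0$, while $F(p,z) \to +\infty$ as $z \to -\infty$. The intermediate value theorem combined with the monotonicity then produces the unique $z^{*}(p)$, with $z^{*}(0) = \gamma_{0}$.

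For non-existence when $|p|$ is large, I would again evaluate at the boundary, now with $z_{0} = |p| - \tfrac12 + \gamma_{0}$:
\[
F(p, z_{0}) = \tfrac12 (|p|-1)^{2} - \pi e^{2}\bigl(\tfrac12 p^{2} + |p| - \tfrac12 \bigr)\, D_{12}(p, z_{0}).
\]
The crucial step is a sharper $D_{12}$ bound than the one displayed in Lemma~\ref{dd1}: rewriting the denominator as $\tfrac12(\rho - |p| + 1)^{2} + |p|\rho(1-t)$, dropping the square, and performing the $t$-integral yields $D_{12}(p, z_{0}) \leq \tfrac{2 R^{1+2\sigma}}{(1+2\sigma)|p|}$ for $|p| > 1$. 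Setting $A := \tfrac{2\pi e^{2} R^{1+2\sigma}}{1+2\sigma}$, this reduces positivity of $F(p, z_{0})$ to the elementary inequality $(|p|-1)^{2} > A|p| + 2A$, whose larger root is $\tfrac12\bigl(2+A + \sqrt{A^{2} + 12A}\bigr) \leq A + 4$ (using $\sqrt{A^{2}+12A} \leq A+6$). Monotonicity then promotes $F(p, z_{0}) > 0$ to $F(p, z) > 0$ for every $z \leq z_{0}$, so no zero exists once $|p| > p_{0} := 4 + A$.

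Finally, for the two-sided bound on $z^{*}(p)$: the upper estimate $z^{*}(p) \leq z_{0}(|p|) \leq z_{0}(p_{0}) = p_{0} - \tfrac12 + \gamma_{0} < \tfrac72 + A + \gamma_{0}$ is immediate from the definition of $z_0$ and monotonicity of $z_0$ past $|p|=1$. For $z^{*}(p) > 0$ I would verify $F(p,0) > 0$ and invoke monotonicity. When $p^{2} \leq 2\gamma_{0}$ the strip condition gives $z^{*}(p) > \gamma_{0} - p^{2}/2 \geq 0$ automatically. When $p^{2} > 2\gamma_{0}$, I would bound $D_{12}(p,0) \leq \tfrac{4 R^{2+2\sigma}}{3(2+2\sigma)\gamma_{0}}$ by retaining only $\gamma_{0}$ in the denominator and, using the explicit value $\gamma_{0} = \pi e^{2} R^{2+2\sigma}/(1+\sigma)$, conclude $\pi e^{2}(\tfrac12 p^{2} - \gamma_{0}) D_{12}(p,0) \leq p^{2}/3$, whence $F(p,0) \geq p^{2}/6 + \gamma_{0} > 0$. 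The main obstacle I anticipate is extracting the sharper $2/|p|$ bound for $D_{12}$ needed in the non-existence step, since the constant $4$ stated in Lemma~\ref{dd1} yields only $p_{0} \leq 4 + 2A$; everything else is routine monotonicity and boundary evaluation.
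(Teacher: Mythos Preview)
Your proposal is correct and follows the same backbone as the paper --- monotonicity of $F$ in $z$ on the strip $\tfrac12 p^{2}+z-\gamma_{0}>0$, combined with evaluating $F$ at $z=z_{0}(|p|)$ --- but it departs from the paper in two places worth flagging.

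For the non-existence step and the bound on $p_{0}$, the paper simply invokes Lemma~\ref{dd1} as stated (with the constant $4$), arrives at $\tfrac12(|p|-1)^{2}-A(|p|+2)$, and concludes only $|p|\geq 4+2A$ suffices; this matches the bound in Theorem~\ref{th1} but not the sharper constant in the lemma statement. Your observation that the $|p|>1$ branch of the proof of Lemma~\ref{dd1} actually delivers the constant $2$ (since $\int_{-1}^{1}(1+t)\,dt=2$) repairs this and yields the stated $p_{0}<4+A$. So your anticipated obstacle is not an obstacle at all: the sharper bound is already implicit in the paper's own computation.

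For the lower bound $z^{*}(p)>0$, the paper takes a different and somewhat longer route: it rewrites $F(p,z)=0$ as a quadratic in $\mu=\pi e^{2}$, argues that a real root $\mu>0$ with $z\leq 0$ would force $D_{12}(p,z)\geq 2d/p^{2}$ where $d=R^{2+2\sigma}/(1+\sigma)$, and then disproves that inequality by subtracting the two integrals and checking the sign of the numerator. Your direct approach --- bounding $D_{12}(p,0)$ by retaining only $\gamma_{0}$ in the denominator, using the explicit form $\gamma_{0}=\pi e^{2}R^{2+2\sigma}/(1+\sigma)$ to get $\pi e^{2}D_{12}(p,0)\leq 2/3$, and concluding $F(p,0)\geq p^{2}/6+\gamma_{0}>0$ --- is shorter and more transparent. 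Both arguments ultimately rely on the specific value of $\gamma_{0}$; yours simply avoids the detour through the quadratic reformulation.
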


\begin{proof}
Solving the equation $K(p,z)=0$ in the region $\Omega^-$ is equivalent to solving the equation 
\begin{equation}\label{KKK1}
F(p,z)=\frac12p^2-z+\gamma_0 - \pi e^2 D_{12}(p,z)\left(\frac12p^2+z-\gamma_0\right)=0.
\end{equation}
Note that
\begin{multline*}
F'_z(p,z) = -1-\pi e^2 D_{12}(p,z)\\
- \pi e^2\left(\frac12p^2+z-\gamma_0\right) \int_0^R \int_{-1}^1 \frac{(1-t^2)\rho^{1+2\sigma} dtd\rho}{( p^2/2-|p|\rho t + \rho^2/2+\rho+\gamma_0-z)^2}<0
\end{multline*}
and therefore, $F(p,z)$ is a decreasing function with respect to $z$ in the region $\Omega^-$, if $p^2/2+z-\gamma_0\geq 0$.
It is also easy to check that $F(p,z)>0$ if $z<\gamma_0 - p^2/2$.
For $0<|p|\leq 1$ and $z=z_0(|p|)$, we have
$$
F(p,z)=- \pi e^2 p^2D_{12}\left(p,\frac12 p^2+\gamma_0\right)< 0, \quad 
$$ 
and therefore, (\ref{KKK1}) \red{has a solution} at least when $0<|p|\leq 1$.

The upper bound in (\ref{Eig1}) follows from (\ref{eqz8}). Indeed, we have
\begin{align*}
F(p,z_0(|p|)) &= \frac12(|p|-1)^2-\frac{\pi e^2}{2} (p^2+2|p|-1)D_{12} \left( p,|p| - \frac{1}{2} + \gamma_0 \right) \\
&\geq \frac12(|p|-1)^2- (|p|+2) \frac{2\pi e^2 R^{1+2\sigma}}{1+2\sigma},
\end{align*}
and the latter expression is strictly positive if 
\[
|p|\geq 4+\frac{ 4 \pi e^2 R^{1+2\sigma}}{1+2\sigma}.
\]
It can be written in terms of $z$ as 
$$
z=|p|-1/2+\gamma_0\geq\frac72+\frac{4\pi e^2 R^{1+2\sigma}}{1+2\sigma}+\gamma_0.
$$
We prove the lower bound of (\ref{Eig1}) by  contradiction. 
Assume that there exists some $z\leq 0$ satisfying (\ref{KKK1}). 
Let $\mu=\pi e^2$ and rewrite Equation~(\ref{KKK1}) as 
\begin{equation}\label{k1}
 2(D_{12}d)\mu^2-((p^2+2z)D_{12}-2d)\mu+p^2-2z=0, 
\end{equation}
where $d=R^{2+2\sigma}/(1+\sigma)$. 
Since $\mu>0$, only the positive solutions of the equation are of interest, 
and therefore, the following system of inequalities should hold:
\begin{equation}
\begin{cases}
((p^2+2z)D_{12}-2d)^2-8D_{12}d(p^2-2z) \geq 0, \\
(p^2+2z)D_{12}-2d \geq 0,
\end{cases}
\end{equation}
which is equivalent to
$$
\begin{cases}
z'\geq -1-s+2\sqrt{2s}, \\
z'\geq s-1,
\end{cases}
$$  
when $z'\leq 0$.
Here,
\[
s=\frac{2d}{p^2D_{12}(p,z)} \quad \mbox{\rm and} \quad z'=\frac{2z}{p^2}.
\]
From the assumption $z'\leq0$, it follows that $s\leq 1$, which is equivalent to 
$$
D_{12}(p,z)\geq \dfrac{2d}{p^2}.
$$
The latter inequality is {\em not} true for any values of $e$ and $R$. 
Indeed, note that
\begin{align*}
D_{12}(p,z)-\dfrac{2d}{p^2}
   &= \int_0^R\int_{-1}^1 \frac{(1-t^2)dt\rho^{1+2\sigma} d\rho}{p^2/2-|p|\rho t+\rho^2/2+\rho +\gamma_0-z} \\
   &\qquad - \int_0^R\int_{-1}^1 \frac{dt\rho^{1+2\sigma} d\rho}{p^2/2}\\
   &= \int_0^R \int_{-1}^1 \frac{2(-(|p| t-\rho)^2/2 - \rho-\gamma_0+z)}{p^2( p^2/2-|p|\rho t + \rho^2/2+\rho +\gamma_0-z)} \,dt\rho^{1+2\sigma} d\rho < 0.
\end{align*}
\red{Therefore,} the equation $K(p,z)=0$ \red{does not admit any} solution in the region $\Omega^-$ when $z\leq 0$. 
\end{proof}
\section{Proof of the main results}\label{resseq5}
Summarizing the results from the previous sections, we \red{now} prove \red{our} main theorems.

\paragraph{Proof of Theorem~\ref{th1}.} 
Repeating the argument at the beginning of Section~\ref{resseq4}, we prove that the essential spectrum of $H(p)$ is $[z_0(|p|),+\infty)$ and the zeros of the function $K(p,z)$ are \red{the} only possible addition to the spectrum in the interval $(-\infty,z_0(|p|))$.

By Lemma~\ref{dd2}, for each $p\in \mathbb{R}^3$, there exists a solution to $K(p,z)=0$ and the range of these values of $z$ belongs to the interval 
\[
\left[ 0,\frac72+\frac{2\pi e^2 R^{1+2\sigma}}{1+2\sigma}+\gamma_0 \right).
\]
These solutions are the eigenvalues of $H(p)$ for each $p$, with eigenfunctions 
$$
\psi(p,k)=\begin{pmatrix}
        \dfrac{p^2}{T(p,z^*(p))} \\[3ex]
        -\dfrac{(p\cdot |G(k,1)\rangle )}{L(p,z^*(p))} \left(\dfrac{p^2}{T(p,z^*(p))}+1 \right) \\[3ex]
    -\dfrac{(p\cdot |G(k,2)\rangle )}{L(p,z^*(p))} \left(\dfrac{p^2}{T(p,z^*(p))}+1 \right)\\
    \end{pmatrix}.
$$
\red{This establishes Theorem~\ref{th1}.}
\qed

\paragraph{Proof of Theorem~\ref{th2}.} 
It is easy to show that  $E_{\sigma}(|p|)$ is \red{a} $C^2$-function. Moreover in Lemma~\ref{dd2}, we proved the equation $F(p,z)=0$ has a unique solution $z=z^*(p)$ for small $p$, which is equal to $E_{\sigma}(|p|)$. Using that $E_\sigma(0)=\gamma_0$, $E_\sigma'(0)=0$, and formula (\ref{eff1}), we \red{get} the desired result:
$$
\frac{\partial^2 E_{\sigma}(p)}{\partial |p|^2}\Bigg|_{p=0}
=-\frac{F''_{|p|}}{F'_z}\Bigg|_{p=0}
=\frac{1-\pi e^2D_{12}(0,\gamma_0)}{1+\pi e^2D_{12}(0,\gamma_0)}.
$$ 
\red{This establishes Theorem~\ref{th2}.}
\qed

\section*{Acknowledgments} The authors express their gratitude to the reviewers for helpful comments and valuable suggestions that greatly improved the manuscript.
 This work was supported by research Grant No.~12 of the Higher Education Reform Project, Mongolia.

\parbox{7.5cm}{\it \small Dagva Dayantsolmon (D. Dayantsolmon) \\
Department of Mathematics\\
National University of Mongolia \\
University Street 3 \\
Ulaanbaatar, Mongolia\\}

\parbox{6.5cm}{\it \small Artbazar Galtbayar (A. Galtbayar) \\
Department of Applied Mathematics\\
National University of Mongolia \\
University Street 3 \\
Ulaanbaatar, Mongolia} 

\end{document}